\newtheorem{theorem}{Theorem}
\newtheorem{assume}{Assumption}
\newtheorem{rem}{Remark}
\newcommand{\algrulehor}[1][.2pt]{\par\vskip.5\baselineskip\hrule height #1\par\vskip.5\baselineskip}
	\title{\LARGE \bf An Improved Data Augmentation Scheme for Model Predictive Control Policy Approximation
	}
\author{Dinesh Krishnamoorthy
	\thanks{The author is with 
	the Department of Mechanical Engineering, Eindhoven University of Technology, 5600 MB, Eindhoven, The Netherlands.
		{\tt\small d.krishnamoorthy@tue.nl}}%
}
\newif \ifpreprint
\newcommand{\solution}[1]{#1}
\newcommand{\solution}[1]{}
\begin{document}

\maketitle
		\thispagestyle{empty}
\pagestyle{empty}

\begin{abstract}%
	This paper considers the problem of data generation for MPC policy approximation. Learning an approximate MPC policy from expert demonstrations requires a  large data set consisting of optimal state-action pairs, sampled across the feasible state space. Yet, the key challenge of efficiently generating  the training samples has not been studied widely. Recently,  a  sensitivity-based data augmentation  framework  for MPC policy approximation was proposed, where the parametric sensitivities are exploited to cheaply generate several additional   samples from a single offline MPC computation. The error due to augmenting the training data set with inexact samples was shown to increase with the 
	size of the neighborhood around each sample used for data augmentation. Building upon this work, this letter paper presents an improved data augmentation scheme based on predictor-corrector steps that enforces a user-defined level of accuracy, and shows that the error bound of the augmented samples are independent of the size of the neighborhood used for data augmentation. 
	
\end{abstract}

\begin{keywords}%
Data Augmentation,  direct policy approximation, imitation learning, parametric sensitivities
\end{keywords}

\section{Background}
\subsection{Pre-computed control policies}
Implementing optimization-based controllers such as  model predictive control (MPC) for fast dynamic systems  with limited computing and memory capacity has motivated research on pre-computing and storing the optimal control policy offline such that it can be used online without recomputing the optimization problem. This was first studied under the framework of explicit MPC for linear time-invariant systems \cite{bemporad2002explicit}. However, the synthesis of the explicit MPC law does not scale well, since the number of piecewise affine polyhedral regions required to capture the MPC control law can increase exponentially with problem size. Extension to nonlinear systems and economic objectives are also not trivial. 

An alternative approach is to approximate the optimal policy using parametric function approximators, such as deep neural networks, which are broadly studied under the context of \emph{learning from demonstrations}. The idea of approximating an MPC policy using neural networks was first proposed in  \cite{parisini1995receding}, but this idea remained more or less dormant (due to the high cost of offline training).  However, with the recent promises of  deep learning, there has been an unprecedented surge of interest in approximating control policies, \ifpreprint see e.g. \cite{chen2018approximating,hertneck2018learning,zhang2019safe,drgovna2018approximate, karg2018efficient,paulson2020approximate,sanchez2016learning,kumar2021industrial,aakesson2006neural,cao2020deep}. \else see \cite{mesbah2022fusion} and the references therein. \fi
\ifpreprint This has also inspired  a number of applications ranging from energy\cite{drgovna2018approximate}, automotive \cite{zhang2020near,quan2019approximate}, chemical processing \cite{kumar2021industrial,karg2021approximateMHE}, robotics \cite{nubert2020safe}, spacecraft \cite{sanchez2016learning}, and  healthcare \cite{bonzanini2020toward} to name a few. \fi
 Research developments in this direction have been predominantly devoted to understanding the safety and performance of approximate policies \cite{hertneck2018learning,zhang2020near,paulson2020approximate,karg2021probabilistic}. Although, these are very important developments in the direction of MPC policy approximation, a major challenge of this approach that hinders practical implementation  is the cost of training the policy, which is not well studied in the literature, as also noted in \cite{chen2022large}.

\ifpreprint
\begin{figure*}
	\centering
	\includegraphics[width=0.77\linewidth]{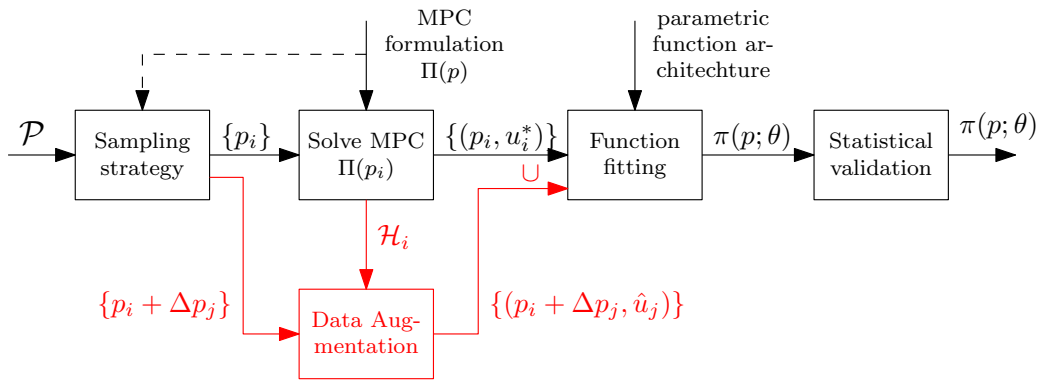}
	\caption{MPC policy approximation pipeline with  data augmentation scheme for efficient data generation.}\label{Fig:Pipeline}
\end{figure*}
\fi

\subsection{High cost of offline  learning}
Consider the \enquote{expert} policy which is given by solving  the MPC problem  
\begin{subequations}\label{Eq:MPC}
	\begin{align}
		V^*(x(t)) = \min_{x_{k},u_{k} } &\; \sum_{k=0}^{N-1} \ell_{Q}(x_{k},u_{k}) + \ell_P(x_{N})\label{Eq:MPC:cost}  \\
		\textup{s.t.} \; & x_{k+1} = f(x_{k},u_{k},d) \quad\forall k \in \mathbb{I}_{0:N-1}\label{Eq:MPC:system}\\
		& \textcolor[rgb]{0,0,0}{ x \in \mathcal{X} , \quad u \in \mathcal{U} }, \quad x_{N} \in \mathcal{X}_{f} \label{Eq:MPC:terminal}\\
		& x_{0} = x(t) \label{Eq:MPC:init}
	\end{align}
\end{subequations}
where $ x \in {X}  \subseteq \mathbb{R}^{n_{x}}$, $ u \in \mathcal{U}  \subseteq \mathbb{R}^{n_{u}}$, and $ d \in \mathbb{R}^{n_{d}} $ denote the states, inputs, and the model parameters, respectively. $f:\mathbb{R}^{n_{x}} \times \mathbb{R}^{n_{u}} \times  \mathbb{R}^{n_{d}} \rightarrow \mathbb{R}^{n_{x}} $ denotes the system model,  $ \ell_{Q}: \mathbb{R}^{n_{x}} \times \mathbb{R}^{n_{u}} \rightarrow \mathbb{R}$ 
and  $ \ell_P: \mathbb{R}^{n_{x}} \rightarrow \mathbb{R}$ denote the stage and terminal cost, \textcolor[rgb]{0,0,0}{which are parameterized by the tuning weights $ Q $ and $ P $, respectively, which are treated very generically in this letter to include different MPC formulations, including economic stage and terminal costs}.  $ N $ is the length of the prediction horizon, \textcolor[rgb]{0,0,0}{ \eqref{Eq:MPC:terminal} denotes the path and terminal constraints}, and \eqref{Eq:MPC:init} denotes the initial condition constraint. Solving \eqref{Eq:MPC} at each time step and implementing the first control input gives the implicit MPC policy  $ u^*=\pi^*(x(t)) $.

Approximating the expert policy $ \pi^*(x)  $ using supervised learning requires a data set consisting of the expert demonstrations in the form of optimal state-action pairs  $ \mathcal{D} := \{(x_{i},u_{i}^*)\}_{i=1}^{N_{s}} $, which tells us what action $ u_{i}^* $ the expert (i.e., MPC) would take at  state $ x_{i} $. Using this data set, we can learn a parameterized policy  $ \pi(x;\theta) $ that tells what  actions to take as a function of the current state $ x $, such that $ \pi(x;\theta) $ mimics $ \pi^*(x) $. The data set $ \mathcal{D} $ is generated by solving the MPC problem offline for different realizations of $ x_{i} $, ideally covering the entire \textcolor[rgb]{0,0,0}{feasible state space $\mathcal{X}_{feas}$}. 

The availability of a sufficiently rich training data set
covering the entire  state space is a key stipulation for satisfactory learning \cite{chen2022large,hertneck2018learning}. 
In fact, it has also been shown that the learned policy can  result in instability when insufficient demonstrations are used for policy fitting \cite{ross2011reduction,palan2020fitting}.
This means the MPC problem \eqref{Eq:MPC} must be solved for several different state realizations $ x_{i} $, the size of which \textcolor[rgb]{0,0,0}{increases} exponentially as the problem dimension increases, (much like the scalability issues with explicit MPC, albeit offline). For example, if we want to include at least $ s $ samples along each dimension, then we need to generate at least $ N_{s} = s^{n_{x}} $ samples. 
To this end, the first challenge is the cost of generating the data set $ \mathcal{D} := \{(x_{i},u_{i}^*)\}_{i=1}^{N_{s}} $. 

Assume now that a sufficiently rich data set $ \mathcal{D} $ has been generated offline covering $ \mathcal{X} $, and  an approximate policy $ \pi_{}(x;\theta) $ is learned and has  passed the necessary statistical validations and is certified for online use.  Now, if any of the MPC tuning parameters $ Q $, $ P $ change, or if the  model parameter $d  $ is updated online (which is not uncommon), then this renders the learned policy $ \pi(x;\theta) $  useless. A new training data set $ \mathcal{D} $ now has to be generated using the modified MPC formulation and the approximate policy must be re-trained  and re-validated from scratch.  This is perhaps one of the biggest drawbacks of MPC policy approximation. In theory, this can of course be circumvented by training a parametric function  that is also a function of the weights and model parameters $ \pi_{}(p;\theta) $, where for example $ p := [x,\text{vec}(Q),\text{vec}(P),d]^{\mathsf{T}} $ and   $ \pi: \mathcal{P} \rightarrow \mathcal{U} $ with  $ \mathcal{P} $ being the the combined state and parameter space. However, this  makes the offline training problem even more expensive since the dimension of $ p $ can be extremely large. 
Despite the surge of interest in MPC policy approximation, its practical impact will depend on  addressing the pivotal challenge of  offline data generation. 

\subsection{Data Augmentation}
In machine learning literature, and more particularly, computer vision and image classification,  the issue of data-efficiency is addressed using a simple yet powerful  technique known as \enquote{Data Augmentation},  broadly defined as a \textit{strategy
	to artificially increase the number of training samples using computationally
	inexpensive transformations of the existing samples} \cite{tran2017bayesian,taylor2018improving,shorten2019survey}. When the data set is a collection of images (i.e. pixel-based data), data augmentation techniques include geometric transformations (such as rotate, translate, crop, flip, etc.), photometric transformations (such as colorize, saturation, contrast, brightness, etc.), and elastic transformation (such as deformation, shear, grid distortion, etc.) of existing images to artificially augment several additional training samples. 

Unfortunately, when it comes to data sets consisting of optimal state-action pairs sampled from an optimal policy such as $ \pi^*(x) $, the  existing data augmentation methods are not applicable. \ifpreprint For example, what do geometric/photometric/elastomeric transformations even mean for an optimal state-action pair $ (x_{i},u_{i}^*) $?! \fi 
By data augmentation for optimal policy observations, we mean the following: Given an optimal state-action pair $ (x_{i},u_{i}^*) $ observed by querying the expert policy $ \pi^*(x_{i}) $, we would like to augment additional data points using computationally inexpensive transformations that would tell us what the  optimal action $ {u}_{j}^*  $ would be for states $ x_{j}:= x_{i} + \Delta x $ in the vicinity of $ x_{i} $. If we can get  a reasonable approximation $ \hat{u}_{j} \approx  {u}_{j}^* $ without actually querying the expert, then the inexact optimal state-action pair $ (x_{j},\hat{u}_{j}) $ can be augmented to the set of demonstrations used for learning the policy. 
In other words, instead of simply learning from a given set of expert demonstrations, we would like to augment additional samples that are \emph{inferred} from the expert demonstrations, thereby enabling us to generalize to states not included in the original set of expert queries.

Noting that the MPC problem \eqref{Eq:MPC} is parametric in the initial condition, a  sensitivity-based data augmentation scheme amenable for optimal state-action pairs was recently proposed in  \cite{DK2021DataAug}, where it was shown that augmenting additional samples using parametric sensitivities only require the solution to a system of linear equations, which is computationally much cheaper than solving the optimization problem. 
In \cite{DK2021DataAug}, the error bound between the expert policy and approximate policy learned from the augmented data set was shown to depend  on the  size of the region $ \Delta x_{} $ used for data augmentation. 
Therefore, if we want a certain  desired accuracy, depending on the problem Lipschitz properties,  this limits the size of  $ \Delta x_{} $ that can be used for data augmentation. 
Building on our recent work \cite{DK2021DataAug}, the main contribution of this letter paper is an improved data augmentation scheme, where the  augmented data samples are enforced to be within an user-specified accuracy by using additional corrector steps. This would make the error bound independent of the size of $ \Delta x $, thereby enabling us to augment samples from a larger neighborhood $ \Delta x_{} $ without jeopardizing accuracy.

\section{Sensitivity-based data augmentation}
\subsection{Preliminaries}
For the sake of generality, we rewrite the MPC problem \eqref{Eq:MPC} in the standard parametric NLP form 
\begin{subequations}\label{Eq:NLP}
	\begin{align}
		\Pi(p): \quad	\min_{\mathbf{w}} &\; J(\mathbf{w},p)\\
		\textup{s.t.} &\; c(\mathbf{w},p) = 0, 
		\quad g(\mathbf{w},p)\le0
	\end{align}
\end{subequations}
where $ p \in \mathbb{R}^{n_{p}}$ includes current state of the 
system $ x_{i} $, vectorized tuning parameters $ Q,P $, system model parameters $ d $, or any other parameter that can change during online deployment. $ \mathbf{w} \in \mathbb{R}^{n_{w}}$ denotes the primal decision variables (with the  optimal action  $ u^*_{i} \in \mathbb{R}^{n_{u}} $ contained within  $\mathbf{w}^*$), $ J: \mathbb{R}^{n_{p}}\times \mathbb{R}^{n_{w}} \rightarrow \mathbb{R} $ denotes the objective function, $ c: \mathbb{R}^{n_{p}}\times \mathbb{R}^{n_{w}} \rightarrow \mathbb{R}^{n_{c}} $ and $ g: \mathbb{R}^{n_{p}}\times \mathbb{R}^{n_{w}} \rightarrow \mathbb{R}^{n_{g}} $ denotes the set of equality and inequality constraints, respectively. 

The Lagrangian of the optimization problem \eqref{Eq:NLP} is given by
$	\mathcal{L}(\mathbf{w},\lambda,\mu,p) = J(\mathbf{w},p) + \lambda^{\mathsf{T}} c(\mathbf{w},p) + \mu^{\mathsf{T}}g(\mathbf{w},p)
$ 
where $ \lambda \in \mathbb{R}^{n_{c}} $ and $ \mu \in \mathbb{R}^{n_{g}}$ are the dual variables. 
For the inequality constraints, we define $ g_{\mathbb{A}}   \subseteq g$ as the set of active inequality constraints (i.e. $ g_{\mathbb{A}}(\mathbf{w},p) =0 $) and assume strict complimentarity (i.e., $ \mu _{\mathbb{A}}>0 $). The first order necessary conditions of optimality  can be denoted compactly as,
\begin{align}\label{Eq:KKTcompact}
	\varphi(\mathbf{s}(p),p) &:= \begin{bmatrix}
		\nabla\mathcal{L}(\mathbf{w},\lambda,\mu,p) \\
		c(\mathbf{w},p) \\
		g_{\mathbb{A}}(\mathbf{w},p) 
	\end{bmatrix} = 0 \\
\mu_{\mathbb{A}}&>0, \mu_{\bar{\mathbb{A}}}=0, g_{\bar{\mathbb{A}}}\nonumber (\mathbf{w},p)<0
\end{align} 
Any primal-dual vector $ \mathbf{s}^*(p) := [\mathbf{w}^*,\lambda^*,\mu^*]^{\mathsf{T}} $ is called a KKT-point if it satisfies the first order necessary conditions of optimality \eqref{Eq:KKTcompact}. 
\begin{assume}\label{asm:TwiceDifferentiable}
	The cost and constraints $ J(\cdot,\cdot)$, $ c(\cdot,\cdot) $, and $ g(\cdot,\cdot) $ of the NLP problem $\Pi(p)$ are twice continuously differentiable in the neighborhood of the KKT point $ \mathbf{s}^*(p) $. 
\end{assume}
\begin{assume}\label{asm:UniqueMinima}
	Linear independence constraint qualification, second order sufficient conditions and strict complementarity holds for $\Pi(p)$.  
\end{assume}
The above assumption implies that  the primal-dual  solution vector $ \mathbf{s}^*(p_{i}) $ is a unique local minimizer of $ \Pi(p) $. 	Given Assumptions~\ref{asm:TwiceDifferentiable} and \ref{asm:UniqueMinima}, it was shown in \cite{fiacco1976sensitivity} that for parametric perturbations $\Delta p$ in the neighborhood of $ p_{i} $,
there exists a unique, continuous, and differentiable vector
function $ \mathbf{s}(p_{i} +\Delta p ) $ which is a KKT point satisfying LICQ
and SOSC for $ \Pi(p_{i}+ \Delta p ) $, and that the solution vector satisfies $ 	\| \mathbf{s}^*(p_{i} + \Delta p )  - \mathbf{s}^*(p_{i})\| \le  L_{s} \|\Delta p \| $, where the notation $ \|\cdot\| $ by default denotes  Euclidean norm.

\begin{theorem}[\cite{DK2021DataAug}]\label{thm:DataAugment}
	Given Assumptions~\ref{asm:TwiceDifferentiable} and \ref{asm:UniqueMinima}, 
let $ u_{i}^* $ be the optimal policy obtained by querying  $ \Pi(p_{i}) $ for some $ p_{i} \in \mathcal{P} $. Then additional  samples $\{(p_{i} + \Delta p _{j}, u_{i}^*+\Delta u_j)\}_{j} $ in the neighborhood of $ p_{i} $ with the same active constraint set $ g_{\mathbb{A}} $ can be augmented to the data set without querying $ \Pi(p_{i} + \Delta p _{j}) $, where 
$\Delta u_{j} \subset \Delta \mathbf{s}_{j}$  is given by the linear predictor $  \Delta\mathbf{s}_{j} = \mathcal{H}_{i} \Delta p _{j} $.

\end{theorem}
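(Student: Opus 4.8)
The plan is to treat the first-order optimality system \eqref{Eq:KKTcompact} as implicitly defining the primal-dual solution $\mathbf{s}^*$ as a function of the parameter $p$ near $p_i$, and then to read off the linear predictor from the first-order Taylor expansion of this implicit map. First I would fix the active set to be $g_{\mathbb{A}}$ as evaluated at the known solution $\mathbf{s}^*(p_i)$, so that $\varphi(\mathbf{s},p)$ in \eqref{Eq:KKTcompact} is a smooth square system of $n_w + n_c + |\mathbb{A}|$ equations in the unknowns $\mathbf{s} = [\mathbf{w},\lambda,\mu_{\mathbb{A}}]$. Assumption~\ref{asm:TwiceDifferentiable} makes $\varphi$ continuously differentiable, and I claim Assumption~\ref{asm:UniqueMinima} makes the Jacobian $\nabla_{\mathbf{s}}\varphi$ nonsingular at $(\mathbf{s}^*(p_i),p_i)$: this Jacobian is exactly the KKT matrix built from the blocks $\nabla^2_{\mathbf{ww}}\mathcal{L}$, $\nabla_{\mathbf{w}} c$ and $\nabla_{\mathbf{w}} g_{\mathbb{A}}$, and LICQ together with SOSC on the critical cone is the standard sufficient condition for its invertibility.

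With nonsingularity in hand, I would invoke the implicit function theorem to obtain a locally unique $C^1$ map $p \mapsto \mathbf{s}(p)$ with $\mathbf{s}(p_i) = \mathbf{s}^*(p_i)$ and $\varphi(\mathbf{s}(p),p) = 0$ on a neighborhood $\mathcal{N}$ of $p_i$. Two things then need checking. First, that for $p \in \mathcal{N}$ the computed point is a genuine KKT point of $\Pi(p)$ with the \emph{same} active set: strict complementarity ($\mu_{\mathbb{A}}(p_i) > 0$ and $g_{\bar{\mathbb{A}}}(\mathbf{s}^*(p_i),p_i) < 0$) is an open condition, so shrinking $\mathcal{N}$ keeps $\mu_{\mathbb{A}}(p) > 0$ and the inactive constraints strictly negative, and SOSC likewise persists by continuity of the second derivatives — this is exactly the Fiacco result already quoted in the text, which additionally furnishes the Lipschitz bound $\|\mathbf{s}^*(p_i+\Delta p) - \mathbf{s}^*(p_i)\| \le L_s\|\Delta p\|$. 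Second, differentiating $\varphi(\mathbf{s}(p),p) = 0$ at $p_i$ gives $\nabla_{\mathbf{s}}\varphi \, \tfrac{d\mathbf{s}}{dp} + \nabla_p\varphi = 0$, hence $\tfrac{d\mathbf{s}}{dp}\big|_{p_i} = -\left(\nabla_{\mathbf{s}}\varphi\right)^{-1}\nabla_p\varphi =: \mathcal{H}_i$, which is just a single linear solve against the already-factorized KKT matrix, with no re-optimization.

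Finally, a first-order Taylor expansion of the $C^1$ map yields $\mathbf{s}^*(p_i + \Delta p_j) = \mathbf{s}^*(p_i) + \mathcal{H}_i \Delta p_j + o(\|\Delta p_j\|)$; discarding the remainder defines the augmented primal-dual increment $\Delta\mathbf{s}_j = \mathcal{H}_i\Delta p_j$, and selecting the rows of $\mathcal{H}_i$ corresponding to the control components contained in $\mathbf{w}$ gives $\Delta u_j \subset \Delta\mathbf{s}_j$. The pair $(p_i+\Delta p_j,\, u_i^* + \Delta u_j)$ is then produced entirely from quantities available at $p_i$, establishing the claim. I expect the main obstacle to be the bookkeeping around the active set — rigorously choosing $\mathcal{N}$ so that $\mathbb{A}$ is stable and so that the IFT branch coincides with the true minimizer branch — rather than the differentiation itself; this is precisely the content hidden inside the citation to \cite{fiacco1976sensitivity}, so a clean proof would either cite it directly or reproduce its neighborhood argument.
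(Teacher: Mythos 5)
Your proposal is correct and follows essentially the same route as the paper's proof: a first-order Taylor expansion of the primal-dual solution map combined with the implicit function theorem applied to the KKT system $\varphi(\mathbf{s}^*(p),p)=0$, yielding $\mathcal{H}_i = -\left[\frac{\partial\varphi}{\partial\mathbf{s}}\right]^{-1}\frac{\partial\varphi}{\partial p}$, with the existence, uniqueness, and differentiability of the solution branch (and active-set stability under strict complementarity) delegated to the cited result of Fiacco. The additional detail you supply on KKT-matrix nonsingularity and neighborhood bookkeeping is a fuller account of what the paper compresses into that citation, not a different argument.
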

\begin{proof}
Taylor series expansion of $ \mathbf{s}^*(p) $ around $ p_i $, gives
\begin{align}\label{Eq:Taylor}
	\mathbf{s}^*(p_i+\Delta p_{j} ) =\mathbf{s}^*(p_{i}) + \frac{\partial \mathbf{s}^*}{\partial p} \Delta p_{j}  + \mathcal{O}(\|\Delta p_{j} \|^2)
\end{align}
Since  $ \mathbf{s}^*(p) $ satisfies \eqref{Eq:KKTcompact} for any $ p $ in the neighborhood of $ p_{i} $, using \textcolor[rgb]{0,0,0}{the }implicit function theorem, we have
\begin{align*}
	&\frac{\partial}{\partial p} \varphi(\mathbf{s}^*(p),p) \bigg|_{p=p_{i}}  = \frac{\partial  \varphi}{\partial \mathbf{s}}  \frac{\partial\mathbf{s}^*}{\partial p} + \frac{\partial \varphi}{\partial p} = 0\\
	\Rightarrow &   \frac{\partial\mathbf{s}^*}{\partial p}  = -\left[\frac{\partial  \varphi}{\partial \mathbf{s}}  \right]^{-1}\frac{\partial \varphi}{\partial p} =: \mathcal{H}_{i}
\end{align*}
\ifpreprint
where,
\[ \frac{\partial  \varphi}{\partial \mathbf{s}} :=  	\begin{bmatrix}
	\nabla ^2_{\mathbf{w}\mathbf{w}} \mathcal{L}&\nabla_{\mathbf{w}}{c} & \nabla_{\mathbf{w}}{g}_{\mathbb{A}}\\ 
	\nabla_{\mathbf{w}}{c}^{\mathsf{T}} & 0  & 0 \\ 
	\nabla_{\mathbf{w}}{g}_{\mathbb{A}}^{\mathsf{T}}& 0 &0
\end{bmatrix}, \frac{\partial \varphi}{\partial p}:=\begin{bmatrix}
\nabla ^2_{\mathbf{w}p} \mathcal{L}\\ 
\nabla_{p}{c}^{\mathsf{T}}\\ 
\nabla_{p}{g}_{\mathbb{A}}^{\mathsf{T}}
\end{bmatrix} \]
For the sake of notational simplicity we will drop the subscripts such that $ \nabla(\cdot) $ indicates $ \nabla_{\mathbf{w}}(\cdot) $ by default, unless otherwise specified explicitly. 
\fi 
Substituting this in \eqref{Eq:Taylor} and ignoring the higher order terms,  we get the linear predictor
\begin{equation}\label{Eq:SensitivityUpdate1}
 \textcolor[rgb]{0,0,0}{\mathbf{s}^*(p_i+\Delta p_{j} )  \approx	\hat{\mathbf{s}}(p_{i}+\Delta p_{j} ) }  = \mathbf{s}^*(p_{i})  + \underbrace{\mathcal{H}_{i}\Delta p_{j} }_{:= \Delta \mathbf{s}_{j}}
\end{equation}
which contains the inexact optimal action $ \hat{u}_{j}  = u_{i}^* + \Delta u_{j}$.
\end{proof}
\begin{rem}[Matrix factorization and inverse]The main computation in the linear predictor involves getting $ \mathcal{H}_{i} $. Notice that the first term $ \frac{\partial  \varphi}{\partial \mathbf{s}}  $ is nothing but the KKT matrix, which is already  factorized when solving the NLP $ \Pi(p_{i}) $ using e.g. \texttt{IPOPT} \cite{pirnay2012optimal}. More importantly,
augmenting any number of data samples using a single data sample $ (p_{i}, u_{i}^*)$ requires computing  $ \mathcal{H}_{i} $  only  once! That is, in the simplest case, using only a single NLP solve at $ x_{i} $, and a single linear solve (to compute $ \mathcal{H}_{i} $), we can efficiently augment several data samples  $ \Delta s_{j} = \mathcal{H}_{i}\Delta p_j $. 
\end{rem}

\ifpreprint
\begin{figure}
	\centering
	\includegraphics[width=0.6\linewidth]{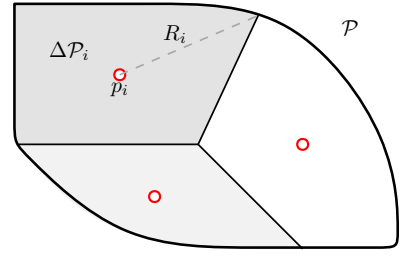}
	\caption{Schematic representation of  the inexact solution manifold with $ N_{s} = 3 $ points where the NLP is solved exactly (denoted by red circle), and parameter space $ \mathcal{P} $ divided into the regions $ \Delta \mathcal{P}_{i} $ around each $ p_{i} $, which are used for data augmentation.} \label{Fig:PWAmanifold}
\end{figure}
\fi
\begin{figure}
	\centering
	\includegraphics[width=0.9\linewidth]{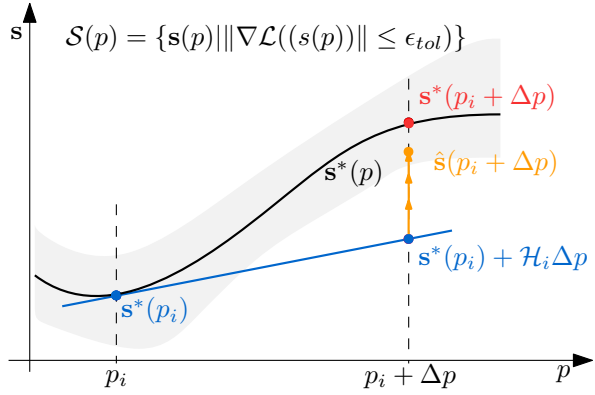}
	\caption{Schematic illustration of the proposed sensitivity-based data augmentation. Blue line indicate the linear predictor step, and the orange arrows indicate the corrector steps. The set $ \mathcal{S}(p) $ is shown in gray. }\label{Fig:DataAug}
\end{figure}
To understand the effect of augmenting the data set with inexact samples, consider the case where the true solution manifold of  $ \Pi(p) $ is given by $ \mathbf{s}^*(p) $ for all $ p \in \mathcal{P} $, which  is sampled at $ N_{s} $ discrete points $ \{p_{i}\}_{i=1}^{N_{s}} \in \mathcal{P} $.  Assume that there exists a piece-wise affine inexact solution manifold, denoted by  $ \hat{\mathbf{s}}(p) $ for all $ p \in \mathcal{P} $ given by the linear predictor \eqref{Eq:SensitivityUpdate1}  within a neighborhood $ \Delta \mathcal{P}_{i} $ around each $ p_{i} $
 for all $ i = 1, \dots,N_{s} $. We make the following assumption regarding the neighborhood $ \Delta \mathcal{P}_{i} $.
\begin{assume}\label{asm:inexactManifold}
For all $ i = 1, \dots,N_{s} $, the neighborhood  $ \Delta \mathcal{P}_{i} $ is chosen
around each $ p_{i} $  such that $ p_{i} \in \langle\Delta \mathcal{P}_{i}\rangle $, $ \bigcup_{i=1}^{N_{s}} \Delta \mathcal{P}_{i} = \mathcal{P}$, and that the active constraint set remains the same within each neighborhood  $ \Delta \mathcal{P}_{i} $ \textcolor[rgb]{0,0,0}{, and the solution vector satisfies $ 	\| \mathbf{s}^*(p_{i} + \Delta p )  - \mathbf{s}^*(p_{i})\| \le  L_{s} \|\Delta p \| $.}
\end{assume}
We construct the idea of such a  piece-wise affine inexact manifold $ \hat{\mathbf{s}}(p) $ given by the linear predictor \eqref{Eq:SensitivityUpdate1}, so that  the  augmented data points are considered to be sampled from this inexact manifold.
 \textcolor[rgb]{0,0,0}{Learning the policy then involves  fitting a parametric function $ \pi (x,\theta) $ to the  augmented training data set $ \hat{u}(p) \subset \hat{\mathbf{s}}(p)  $,
such that
\begin{equation}\label{Eq:LSE}
	\theta_1 = \arg \min_{\theta} \sum_{i=1}^{N_{s}+M} \|\pi(p_{i};\theta) - \hat{u}(p_{i})\|^2
 \end{equation}}
\begin{assume}\label{asm:RichParametrization}
	The functional form of $ \pi(p;\theta) $ has sufficiently rich parametrization and $ \exists \; \hat{\theta} $ such that \textcolor[rgb]{0,0,0}{$ \hat{u}(p) = \pi(p;\hat{\theta}) $.}
\end{assume}

\begin{theorem}[\cite{DK2021DataAug}]\label{thm:MaxD}
		Consider a problem with the same setup as in Theorem~\ref{thm:DataAugment}, where the base data set $ \mathcal{D}^0 $ with $ N_{s} $ samples is obtained by querying $ \Pi(p) $, which is  augmented with  $ M $ inexact samples from $ \hat{u}(p) $ using \eqref{Eq:SensitivityUpdate1}.  Under Assumptions~\ref{asm:inexactManifold} and \ref{asm:RichParametrization}, 
		\begin{equation}\label{Eq:Errorbound0}
			\|\pi(p;\theta_1)-\pi^*(p)\| \le \max\left( \left\{L_{p_{i}}R_{i}^2\right\}_{i=1}^{N_{s}}\right) 
		\end{equation} in probability as $ M \rightarrow \infty $  holds if $ \theta_1 $ is a consistent estimator of \eqref{Eq:LSE}, with $ R_{i} :=  \sup_{\Delta p \in \Delta\mathcal{P}_{i}}\|\Delta p\|$.
	\end{theorem}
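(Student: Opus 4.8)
The plan is to establish \eqref{Eq:Errorbound0} by a triangle inequality that isolates the \emph{policy-fitting} error from the \emph{data-augmentation} error, and then to let $M\to\infty$ so that only the (deterministic) augmentation error remains. Throughout, write $\pi^*(p)=u^*(p)$ for the optimal-input subvector of the exact primal--dual solution $\mathbf{s}^*(p)$, and $\hat u(p)$ for the corresponding subvector of the piecewise-affine predictor $\hat{\mathbf{s}}(p)$ assembled from \eqref{Eq:SensitivityUpdate1} over the cover $\{\Delta\mathcal{P}_i\}_{i=1}^{N_s}$. For every $p\in\mathcal{P}$,
\begin{equation*}
\|\pi(p;\theta_1)-\pi^*(p)\|\;\le\;\|\pi(p;\theta_1)-\hat u(p)\|\;+\;\|\hat u(p)-u^*(p)\|,
\end{equation*}
and I would bound the two terms by separate arguments.

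For the augmentation error, fix $p\in\mathcal{P}$. By Assumption~\ref{asm:inexactManifold} there is an index $i$ with $p\in\Delta\mathcal{P}_i$, hence $p=p_i+\Delta p$ with $\|\Delta p\|\le R_i$ and with the active set unchanged along the segment from $p_i$ to $p$ (which lies in $\Delta\mathcal{P}_i$ by the clause $p_i\in\langle\Delta\mathcal{P}_i\rangle$); thus Theorem~\ref{thm:DataAugment} applies and $\hat{\mathbf{s}}(p)$ is exactly the first-order Taylor polynomial \eqref{Eq:Taylor} of $\mathbf{s}^*$ based at $p_i$. Under Assumption~\ref{asm:TwiceDifferentiable}, together with LICQ/SOSC/strict complementarity from Assumption~\ref{asm:UniqueMinima} (which keep the KKT system $\varphi$, and hence the implicit map $\mathbf{s}^*(\cdot)$, smooth on $\Delta\mathcal{P}_i$), the Fiacco sensitivity result \cite{fiacco1976sensitivity} gives that $\mathbf{s}^*(\cdot)$ is twice continuously differentiable there, so the remainder neglected in \eqref{Eq:SensitivityUpdate1} satisfies $\|\mathbf{s}^*(p)-\hat{\mathbf{s}}(p)\|=\mathcal{O}(\|\Delta p\|^2)\le L_{p_i}\|\Delta p\|^2\le L_{p_i}R_i^2$, where $L_{p_i}$ bounds the local curvature of the solution map on $\Delta\mathcal{P}_i$. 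Since $u$ is a subvector of $\mathbf{s}$, $\|\hat u(p)-u^*(p)\|\le L_{p_i}R_i^2\le\max_{1\le i\le N_s}L_{p_i}R_i^2$, uniformly over $p\in\mathcal{P}$.

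For the fitting error, observe that every regression target in \eqref{Eq:LSE} lies on the inexact manifold $\hat u(\cdot)$: the $M$ augmented pairs by construction, and the $N_s$ exact pairs because the predictor \eqref{Eq:SensitivityUpdate1} reproduces $u_i^*$ at $\Delta p=0$, i.e.\ $\hat u(p_i)=u_i^*$. By Assumption~\ref{asm:RichParametrization} there exists $\hat\theta$ with $\pi(\cdot;\hat\theta)=\hat u(\cdot)$, so \eqref{Eq:LSE} is correctly specified with optimal value zero. Consistency of $\theta_1$ for \eqref{Eq:LSE} then yields $\pi(\cdot;\theta_1)\to\hat u(\cdot)$ in probability as $M\to\infty$, so $\|\pi(p;\theta_1)-\hat u(p)\|\to 0$ in probability. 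Plugging this into the triangle inequality: for any $\varepsilon>0$, with probability tending to one as $M\to\infty$, $\|\pi(p;\theta_1)-\pi^*(p)\|\le\varepsilon+\max_{i}L_{p_i}R_i^2$, which is precisely \eqref{Eq:Errorbound0}.

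The main obstacle is making the implication ``$\theta_1$ consistent for \eqref{Eq:LSE} $\Rightarrow$ $\pi(\cdot;\theta_1)\to\hat u(\cdot)$ in probability'' rigorous: this presupposes a sampling model for the augmented parameters under which the empirical objective in \eqref{Eq:LSE} converges to a population objective whose in-function minimizer is uniquely $\hat u$, plus enough regularity/identifiability of the class $\{\pi(\cdot;\theta)\}$ to pass from convergence in the $L^2$-type fitting metric to convergence at the fixed evaluation point $p$ (and, if a uniform bound over $p$ is wanted, equicontinuity of the class). By contrast the augmentation-error bound is essentially Taylor's theorem plus Fiacco regularity; its one genuine subtlety is that it needs the $C^2$ (not merely $C^1$) property of $\mathbf{s}^*$ on each $\Delta\mathcal{P}_i$ to control the quadratic remainder, which is exactly why the constant-active-set clause of Assumption~\ref{asm:inexactManifold} and the second-order conditions of Assumption~\ref{asm:UniqueMinima} are invoked.
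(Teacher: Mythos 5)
Your proposal is correct and follows essentially the same route as the paper's own proof: a per-region quadratic Taylor-remainder bound $\|\hat{\mathbf{s}}(p_i+\Delta p)-\mathbf{s}^*(p_i+\Delta p)\|\le L_{p_i}\|\Delta p\|^2\le L_{p_i}R_i^2$ from Fiacco-type sensitivity regularity, aggregation over the cover $\{\Delta\mathcal{P}_i\}$ via the max, the subvector inequality $\|\hat u - u^*\|\le\|\hat{\mathbf{s}}-\mathbf{s}^*\|$, and then Assumption~\ref{asm:RichParametrization} together with consistency of $\theta_1$ to transfer the bound to $\pi(p;\theta_1)$. Your explicit triangle-inequality split and the caveats about what "consistency $\Rightarrow$ pointwise convergence to $\hat u$" requires only make explicit steps the paper leaves implicit.
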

	\begin{proof}
		Due to the continuity and differentiability of $ \mathbf{s}^*(p_{i}) $\cite{fiacco1976sensitivity}, the following holds for all $  \Delta p \in \Delta \mathcal{P}_{i}  $, 
		\begin{align*}
			\|\hat {\mathbf{s}}(p_{i} + \Delta p) - \mathbf{s}^*(p_{i} + \Delta p)\|  \leq  L_{p_{i}}\|\Delta p\|^2 \leq  L_{p_{i}}R_{i}^2
		\end{align*}
		Aggregating over all the regions  results in the inequality 	\begin{equation*}
			\|\hat {\mathbf{s}}(p) - \mathbf{s}^*(p)\|  \leq \max\left( \left\{L_{p_{i}}R_{i}^2\right\}_{i=1}^{N_{s}}\right) \quad \forall p \in \mathcal{P}
		\end{equation*}
 \textcolor[rgb]{0,0,0}{	Since ${u} \subset  {\mathbf{s}}(p) $, we have $
		\|\hat {u}(p) - u^*(p)\|  \leq 	\|\hat {\mathbf{s}}(p) - \mathbf{s}^*(p)\| $,}
which	quantifies the maximum deviation between the optimal policy and the augmented inexact samples. If $ \theta_1 $ is a consistent estimator of \eqref{Eq:LSE}, then inequality \eqref{Eq:Errorbound0} follows under Assumption~\ref{asm:RichParametrization}.
	\end{proof}


From this we can see that the error due to learning a policy based on augmented data samples \textcolor[rgb]{0,0,0}{depends} on the problem Lipschitz properties $ L_{p_{i}} $, and the size of the neighborhood $ \Delta p_{i} $ used to augment the data samples\ifpreprint (cf. Fig.~\ref{Fig:PWAmanifold})\fi. If one were to use a large neighborhood $ \Delta p_{i} $ in order to reduce the number of offline NLP computations, then this can potentially affect the performance of the learned policy.

\begin{algorithm}[t]
	\vspace{1mm}
	\caption{Improved predictor-corrector sensitivity-based data augmentation.}
	\label{alg:SensitivityTraining2}
	\begin{algorithmic}[1]
		\Require $ \Pi(p) $, $ \mathcal{P} $, $ \mathcal{D} = \emptyset$, $ \epsilon_{tol}  $
		\algrulehor
		\For {$ i = 1,\dots,N_{s} $}
		\State Sample $ p_{i} \in \mathcal{P} $
		\State $ \mathbf{s}^*(p_{i}) \leftarrow $ Solve $ \Pi(p_{i}) $
		\State Extract $ u_{i}^* $ from the solution vector $  \mathbf{s}^*(p_{i}) $
		\State $\mathcal{D} \leftarrow \mathcal{D} \cup \{(p_{i},u_{i}^*)\}$
		\State $ \mathcal{H}_{i} \leftarrow  -\left[\frac{\partial  \varphi(\mathbf{s}^*(p_{i}) ,p_{i})}{\partial \mathbf{s}}  \right]^{-1}\frac{\partial \varphi(\mathbf{s}^*(p_{i}) ,p_{i})}{\partial p} $
		\For {$ j = 1,\dots,m $}
		\State Sample $ \Delta p_{j} \in \Delta\mathcal{P}_{i} $ in the neighborhood of $ p_{i}$
		\State  $  \hat{\mathbf{s}}(p_{i}+\Delta p_{j})  \leftarrow\mathbf{s}^*(p_{i})  + \mathcal{H}_{i}\Delta{p}_{j}$ \Comment predictor
		\While{$ \|
			\nabla \mathcal{L}(\hat{\mathbf{s}}(p_{i}+\Delta p_{j}))\|>\epsilon_{tol} $} \Comment corrector
		\State $  \hat{\mathbf{s}}(p_{i}+\Delta p_{j})  \leftarrow\begin{bmatrix}
			\hat{\mathbf{w}}_{ij}\\
			0\\
			0
		\end{bmatrix}  -  \mathcal{M}_{ij}^{-1} \mathcal{Q}_{ij}$
		\EndWhile
		\State Extract $ \hat{u}_{j}^* $ from the solution vector $  \mathbf{\hat s}^*(p_{i}+ \Delta{p}_{j}) $
		\EndFor
		\EndFor
		
		\algrulehor
		\Ensure $ \mathcal{D} $
	\end{algorithmic}
\end{algorithm}

\subsection{Data augmentation with user-enforced accuracy}
In order to address this issue, we now propose an improved data augmentation scheme, where in addition to the linear predictor,  corrector steps are taken to reduce the approximation error.
\ifpreprint
Notice that the KKT conditions can equivalently be expressed in the following matrix notation
\begin{equation*}
		\begin{bmatrix}
		\nabla ^2 \mathcal{L}&\nabla{c} & \nabla{g}_{\mathbb{A}}\\ 
		\nabla{c}^{\mathsf{T}} & 0  & 0 \\ 
		\nabla{g}_{\mathbb{A}}^{\mathsf{T}}& 0 &0
	\end{bmatrix}\begin{bmatrix}
0\\
	 \lambda^*\\
 \mu_{\mathbb{A}}^* 
\end{bmatrix} + \begin{bmatrix}
\nabla J	\\
c\\
g_{\mathbb{A}} 
\end{bmatrix}
\end{equation*}
Adding this with the predictor step results in 
\begin{equation*}
	\begin{bmatrix}
		\nabla ^2 \mathcal{L}&\nabla{c} & \nabla{g}_{\mathbb{A}}\\ 
		\nabla{c}^{\mathsf{T}} & 0  & 0 \\ 
		\nabla{g}_{\mathbb{A}}^{\mathsf{T}}& 0 &0
	\end{bmatrix}\begin{bmatrix}
		\Delta w \\
	\Delta \lambda+	\lambda^*\\
\Delta \mu_{\mathbb{A}}+	\mu_{\mathbb{A}}^* 
	\end{bmatrix} + \begin{bmatrix}
		\nabla J	\\
		c\\
		g_{\mathbb{A}}
	\end{bmatrix} +\begin{bmatrix}
	\nabla ^2_{\mathbf{w}p} \mathcal{L}\\ 
	\nabla_{p}{c}^{\mathsf{T}}\\ 
	\nabla_{p}{g}_{\mathbb{A}}^{\mathsf{T}}
\end{bmatrix} \Delta p = 0
\end{equation*}
Setting $ \Delta p = 0 $ gives us the SQP correction step
\else
 \textcolor[rgb]{0,0,0}{Adding the KKT conditions to the linear predictor step, and setting $ \Delta p = 0 $ gives us the SQP correction step}\fi
\begin{equation}\label{Eq:NewtonCorrector}
	\begin{bmatrix}
		H&\nabla{c} & \nabla{g}_{\mathbb{A}}\\ 
		\nabla{c}^{\mathsf{T}} & 0  & 0 \\ 
		\nabla{g}_{\mathbb{A}}^{\mathsf{T}}& 0 &0
	\end{bmatrix}\begin{bmatrix}
		\Delta w\\
		\Delta \lambda+ \lambda^*\\
		\Delta \mu_{\mathbb{A}} +\mu_{\mathbb{A}}^*
	\end{bmatrix} + \begin{bmatrix}
		\nabla J	\\
		c\\
		g_{\mathbb{A}}
	\end{bmatrix} = 0
\end{equation}
 \textcolor[rgb]{0,0,0}{Expanding $ \hat{\mathbf{s}}(p_{i} + \Delta p _{j}) := [\hat{\mathbf{w}}_{ij},\hat{\lambda}_{ij},\hat{\mu}_{ij}]^{\mathsf{T}} $, the approximate solution is updated as }
\begin{equation}\label{key}
	\begin{bmatrix}
		\hat{\mathbf{w}}_{ij}\\
		\hat{\lambda}_{ij}\\
		\hat{\mu}_{ij}
	\end{bmatrix} = 	\begin{bmatrix}
	\hat{\mathbf{w}}_{ij}\\
0\\
0
\end{bmatrix} - {\underbrace{\begin{bmatrix}
H & \nabla c & \nabla g_{\mathbb{A}}\\
\nabla c^{\mathsf{T}} & 0 & 0\\
\nabla g_{\mathbb{A}}^{\mathsf{T}} &0&0
\end{bmatrix}}_{:= \mathcal{M}_{ij}}}^{-1} \underbrace{\begin{bmatrix}
\nabla J \\
c\\
g_{\mathbb{A}}
\end{bmatrix}}_{:= \mathcal{Q}_{ij}}
\end{equation}
where $ H $ (Hessian of the Lagrangian), $ \nabla c $, $ \nabla g_{\mathbb{A}} $,$ \nabla J $, $ c $, and $ g_{\mathbb{A}} $ that make up $ \mathcal{M}_{ij} $ and $ \mathcal{Q}_{ij} $ are all evaluated at $\hat{\mathbf{s}}(p_{i} + \Delta p _{j})  $. 
The corrector steps are taken, until the optimality residual defined by 
$\| \nabla \mathcal{L}(\hat{\mathbf{s}}(p_{i} + \Delta p _{j}) )\|$
is less than some user defined tolerance $ \epsilon_{tol} $.
 \textcolor[rgb]{0,0,0}{ If the active constraint set $ g_{\mathbb{A}} $ remains the same (cf. Assumption~\ref{asm:inexactManifold}), then the corrector step corresponds  to a Newton direction on the KKT condition (cf. \eqref{Eq:NewtonCorrector}). See   \cite{kungurtsev2017predictor} and the references therein for  detailed description of the corrector step.} The proposed data augmentation scheme with predictor-corrector steps is summarized in Algorithm~\ref{alg:SensitivityTraining2}.

\begin{theorem}\label{thm:MaxD2}
	Consider a problem with the same setup as in Theorem~\ref{thm:DataAugment}, where the base data set $ \mathcal{D}^0 $ with $ N_{s} $ samples is obtained by querying $ \Pi(p) $, which is  augmented with  $ M = N_{s}m $ inexact samples from $ \hat{\mathbf{s}}(p) $ using Algorithm~\ref{alg:SensitivityTraining2}.  Under Assumption~\ref{asm:RichParametrization}, 
		\begin{equation}\label{Eq:Errorbound}
		\|\pi(p;\theta_1)-\pi^*(p)\| \le r(p)
	\end{equation} holds in probability as $ M \rightarrow \infty $  if $ \theta_1 $ is a consistent estimator of \eqref{Eq:LSE}, where $ r(p) :=  \sup_{\mathbf{s}(p) \in \mathcal{S}(p)}\|\mathbf{s}(p) - \mathbf{s}^*(p)\|$, with $ \mathcal{S}(p):= \{ \mathbf{s}(p) | \| \nabla \mathcal{L}(\hat{\mathbf{s}}(p) ) \| \le \epsilon_{tol}   \} $.
\end{theorem}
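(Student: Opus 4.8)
The plan is to mirror the structure of the proof of Theorem~\ref{thm:MaxD}, but replace the \emph{a priori} quadratic estimate $L_{p_i}R_i^2$ with the \emph{a posteriori} bound $r(p)$ induced directly by the stopping criterion of Algorithm~\ref{alg:SensitivityTraining2}. First I would observe that, by construction, every augmented sample $\hat{\mathbf{s}}(p_i+\Delta p_j)$ produced by the algorithm satisfies the exit condition of the \textbf{while} loop, namely $\|\nabla\mathcal{L}(\hat{\mathbf{s}}(p_i+\Delta p_j))\|\le\epsilon_{tol}$; moreover the corrector step \eqref{Eq:NewtonCorrector} is applied with the fixed active set $g_{\mathbb{A}}$, so the updated iterate exactly satisfies $c(\hat{\mathbf{w}},p)=0$ and $g_{\mathbb{A}}(\hat{\mathbf{w}},p)=0$ (these are zeroed by the Newton step on the KKT system). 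Hence each augmented point lies in the set $\mathcal{S}(p):=\{\mathbf{s}(p)\mid\|\nabla\mathcal{L}(\hat{\mathbf{s}}(p))\|\le\epsilon_{tol}\}$, and therefore $\|\hat{\mathbf{s}}(p)-\mathbf{s}^*(p)\|\le\sup_{\mathbf{s}(p)\in\mathcal{S}(p)}\|\mathbf{s}(p)-\mathbf{s}^*(p)\|=:r(p)$ for every $p$ at which a sample is generated.

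Next, exactly as in Theorem~\ref{thm:MaxD}, since the optimal action $u$ is a subvector of $\mathbf{s}$, the inequality $\|\hat u(p)-u^*(p)\|\le\|\hat{\mathbf{s}}(p)-\mathbf{s}^*(p)\|\le r(p)$ follows. This bounds the gap between the true optimal policy and the inexact samples that populate the augmented training set. Then I would invoke Assumption~\ref{asm:RichParametrization} together with the consistency of $\theta_1$ as an estimator of the least-squares problem \eqref{Eq:LSE}: as $M=N_s m\to\infty$ the empirical risk minimizer converges in probability to a $\hat\theta$ with $\pi(p;\hat\theta)=\hat u(p)$, so $\pi(p;\theta_1)\to\hat u(p)$ in probability. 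Combining the triangle inequality $\|\pi(p;\theta_1)-\pi^*(p)\|\le\|\pi(p;\theta_1)-\hat u(p)\|+\|\hat u(p)-u^*(p)\|$ with the vanishing first term and the $r(p)$ bound on the second yields \eqref{Eq:Errorbound} in probability, which is the claim.

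The main subtlety — and the step I would be most careful about — is the role of the active-set assumption. The set $\mathcal{S}(p)$ in the statement is defined purely through the Lagrangian gradient residual, so for the bound $\|\hat{\mathbf{s}}-\mathbf{s}^*\|\le r(p)$ to be meaningful one must argue that the corrector iterates do not drift to a different active set; this is precisely the content of the parenthetical remark preceding the theorem and is guaranteed within each neighborhood by Assumption~\ref{asm:inexactManifold} (same active constraint set on $\Delta\mathcal{P}_i$), so I would carry that assumption through explicitly even though only Assumption~\ref{asm:RichParametrization} is named in the theorem hypothesis. A secondary point worth a sentence is that $r(p)$ is now \emph{independent of} $R_i$: because the stopping tolerance $\epsilon_{tol}$ is enforced pointwise regardless of how far $\Delta p_j$ reaches, the bound does not degrade as the neighborhood $\Delta\mathcal{P}_i$ is enlarged — this is the improvement over Theorem~\ref{thm:MaxD} and is worth stating as a remark immediately after the proof. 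Finally, I would note that finite termination of the \textbf{while} loop (so that the algorithm actually produces each sample) rests on local quadratic convergence of the SQP/Newton corrector near $\mathbf{s}^*(p)$ under Assumptions~\ref{asm:TwiceDifferentiable}–\ref{asm:UniqueMinima}, for which I would cite \cite{kungurtsev2017predictor}.
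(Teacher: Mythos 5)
Your argument follows the paper's proof essentially verbatim: the stopping criterion places each augmented sample in $\mathcal{S}(p)$, the definition of $r(p)$ then bounds $\|\hat{\mathbf{s}}(p)-\mathbf{s}^*(p)\|$, the subvector relation $u\subset\mathbf{s}$ transfers the bound to $\|\hat{u}(p)-u^*(p)\|$, and Assumption~\ref{asm:RichParametrization} together with consistency of $\theta_1$ in \eqref{Eq:LSE} yields \eqref{Eq:Errorbound}; your added remarks on the fixed active set (Assumption~\ref{asm:inexactManifold}) and finite termination of the corrector loop simply make explicit what the paper defers to the discussion preceding the theorem. One minor caveat: a Newton/SQP corrector step zeroes only the \emph{linearized} constraints, not $c$ and $g_{\mathbb{A}}$ exactly, but this is immaterial here since membership in $\mathcal{S}(p)$ rests solely on the residual test $\|\nabla\mathcal{L}(\hat{\mathbf{s}}(p))\|\le\epsilon_{tol}$.
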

\begin{proof}
	At the solution manifold $ \mathbf{s}^*(p) $, we have $ \nabla\mathcal{L}(\mathbf{s}^*(p)) =0$. From the continuity of the cost and constraints,  $ \exists \; \mathcal{S}(p):= \{ \mathbf{s}(p) \;| \;\| \nabla \mathcal{L}({\mathbf{s}}(p) ) \| \le \epsilon_{tol}   \}  $   containing the solution manifold $ \mathbf{s}^*(p)  $ in its interior (cf. Fig.~\ref{Fig:DataAug})  for all $ p $. Solving the corrector steps until the bound on the optimality residual satisfies $ \| \nabla \mathcal{L}(\hat{\mathbf{s}}(p) ) \| \le \epsilon_{tol}    $ (cf. line 10 in Algorithm~\ref{alg:SensitivityTraining2}) implies that the $ \hat{\mathbf{s}}(p)  \in \mathcal{S}(p)$. Defining the maximum distance from the exact solution manifold $ \mathbf{s}^*(p) $ to the boundary of the set $ \mathcal{S}(p) $ for any $ p $ as \[  r(p) :=  \sup_{\mathbf{s}(p) \in \mathcal{S}(p)}\|\mathbf{s}(p) - \mathbf{s}^*(p)\| \] results in the inequality $ 	\|\hat{\mathbf{s}}(p) - \mathbf{s}^*(p)\| \le r(p) $. 	Since ${u} \subset  {\mathbf{s}}(p) $, we have $
	\|\hat {u}(p) - u^*(p)\|  \leq 	\|\hat {\mathbf{s}}(p) - \mathbf{s}^*(p)\| $.
	Furthermore, from Assumption~\ref{asm:RichParametrization}, we have that
	\begin{align*}
		\|\hat{u}(p) - u^*(p)\| =& \| \pi(p;\hat{\theta}) - \pi^*(p)\|\leq r(p)
	\end{align*}
Our result follows,	if $ \theta_1 $ is a consistent estimator of \eqref{Eq:LSE}.  
\end{proof}

The bound $ r(p) $ clearly depends on $ \epsilon_{tol} $, which can be controlled by the user, as opposed to the bound in Theorem~\ref{thm:MaxD}, which depends on the distance  from the original sample.

\begin{rem}[Augmented samples]
The augmented samples can be approximated from the NLP sample, or another previously augmented sample (similar to a path-following scheme). In either case,  corrector steps are taken until the optimality residual is less than $ \epsilon_{tol} $. Therefore, Theorem~\ref{thm:MaxD2} is valid in both cases. 
\end{rem}
\begin{rem}[Active constraint set]
 \textcolor[rgb]{0,0,0}{	Note that if a parturbation in $ p $ changes the active constraint set $ g_{\mathbb{A}} $, this would  induce  non-smooth points in the solution manifold $ \mathbf{s}^*(p) $. } Capturing the non-smooth points in $ \mathbf{s}^*(p) $ using piecewise-linear prediction manifolds requires solving a predictor-corrector QP  \cite{bonnans1998optimization,kungurtsev2017predictor}. However, in the context of data augmentation, a simpler alternative  could be to discard any samples $ p_{i}+\Delta p  $ that induce a change in the active constraint set.
 \textcolor[rgb]{0,0,0}{In other words, Assumption~\ref{asm:inexactManifold}  limits the size of the neighborhood around which additional samples can be augmented.}
\end{rem}


  \begin{table*}[h]
	\caption{Example 1 - Comparison of the total CPU time required  to generate the different data sets, and the  maximum error. 
	}\label{tb:CPU time}
	\centering
	\small
	\begin{tabular}{ |c||c||c|c||c|c| } 
		\hline
		\multirow{2}{*}{}& full NLP&  \multicolumn{2}{c||}{ predictor only} &  \multicolumn{2}{c|}{predictor-corrector}\\
		\cline{2-6}
		& CPU time [s] & CPU time [s] & Max Error & CPU time [s] & Max Error  \\
		\hline
		Case 1 & 466.42 &4.92& 1.34&11.95&0.0074\\ 
		Case 2  & 456.21&0.45 &12.51& 8.67 & 0.028\\ 
		Case 3  &  542.73& 0.15&79.88&12.96& 0.018\\ 
		\hline
	\end{tabular}
\end{table*}
\ifpreprint
\begin{figure*}[t]
		\begin{subfigure}{0.5\textwidth}
		\centering
		\includegraphics[width=0.95\linewidth]{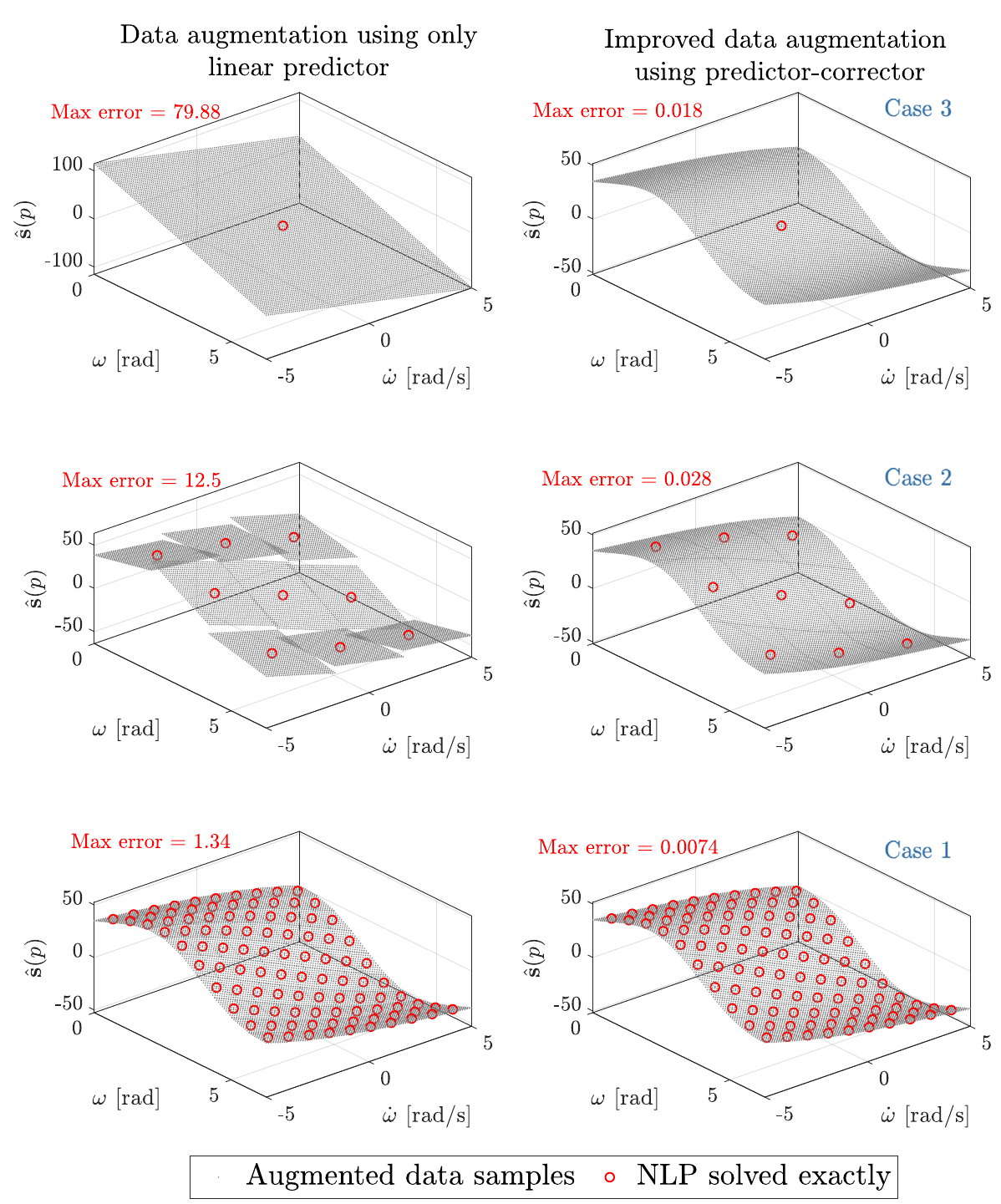}
		\caption{}\label{Fig:pendulumManifold}
	\end{subfigure}
	\begin{subfigure}{0.5\textwidth}
		\centering
	\includegraphics[width=0.95\linewidth]{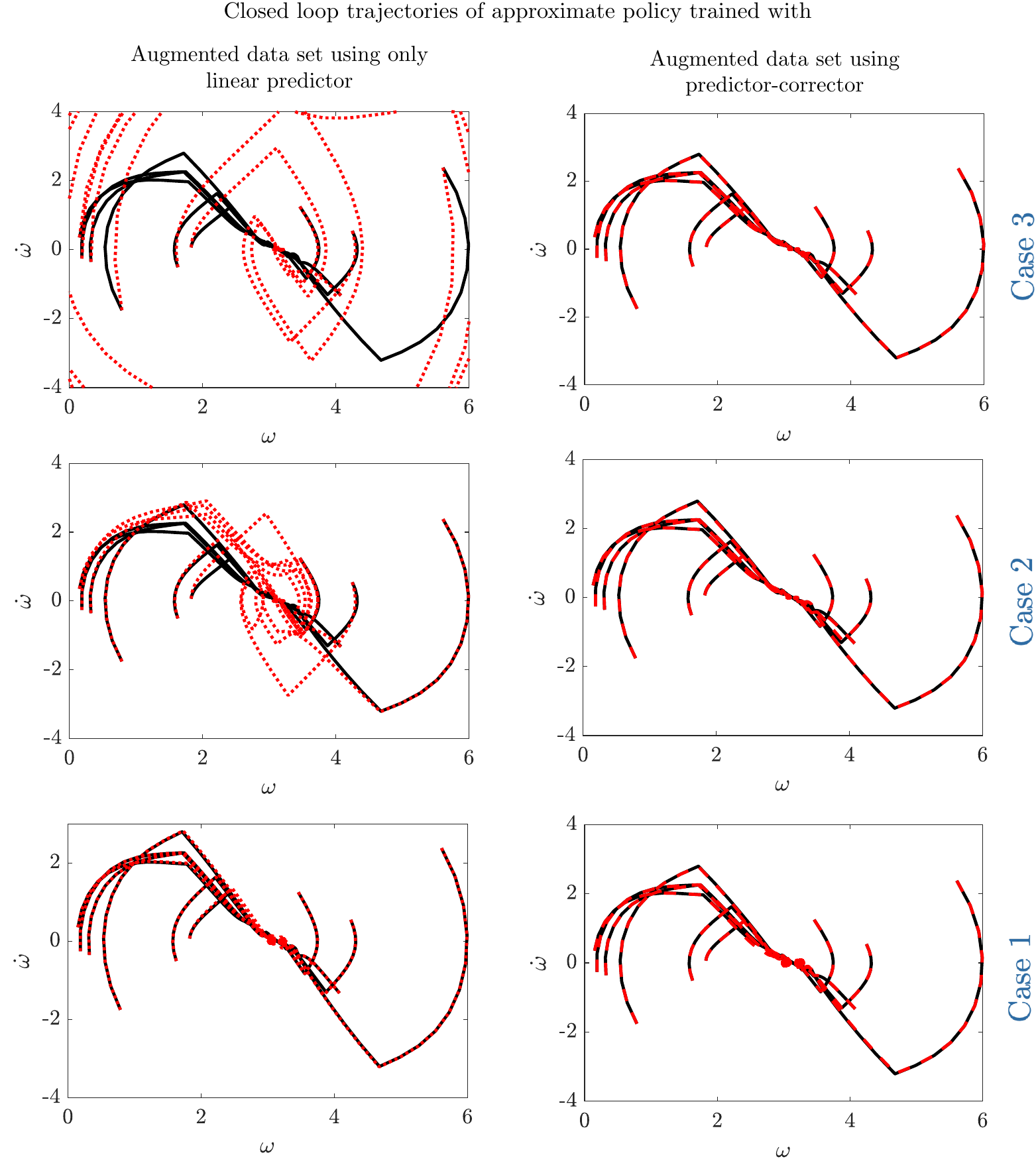}
	\caption{}\label{Fig:pendulumSim}
	\end{subfigure}

\caption{Example 1 - (a) Data set with optimal state-action data pairs for the inverted pendulum example.
	Red circles indicate samples where the MPC problem is solved exactly, and gray dots indicate samples that are augmented using only the linear predictor \cite{DK2021DataAug} (left subplots), and the proposed approach using additional corrector steps (right subplots). (b) Closed-loop trajectories of the approximate policy trained with augmented data set using only the linear predictor \cite{DK2021DataAug} (red dotted lines in left subplots), and with augmented data set using proposed approach (red dashed lines in right subplots), corresponding to the data sets shown in Fig.~\ref{Fig:pendulumManifold}, which are benchmarked against the true MPC policy (solid black lines) for various starting points.}
\end{figure*}
\else\begin{figure}[t]
	\centering
	\includegraphics[width=\linewidth]{figures/DataAugPlot2.pdf}
	\caption{Data set with optimal state-action data pairs for the inverted pendulum example.
		Red circles indicate samples where the MPC problem is solved exactly, and gray dots indicate samples that are augmented using only the linear predictor \cite{DK2021DataAug} (left subplots), and the proposed approach using additional corrector steps (right subplots).}\label{Fig:pendulumManifold}
\end{figure}
\fi

\section{Numerical experiments}

\ifpreprint
\subsection{Example 1: Inverted Pendulum}
\fi

 \textcolor[rgb]{0,0,0}{We demonstrate the performance of the data augmentation framework using the benchmark inverted pendulum problem with the nonlinear dynamics $ml^2 \ddot{\omega}  = u - b\dot{\omega} - mgl\sin\omega$ described by the two states, angle $ \omega  $, and angular velocity $ \dot{\omega} $.} We consider the case where the expert policy is given by a \textcolor[rgb]{0,0,0}{nonlinear model predictive controller with an objective to drive the pendulum to its inverted position.}  The input $ u $ is the torque, \textcolor[rgb]{0,0,0}{which is used to maintain the pendulum at its inverted position of $ \omega = 3.14 $ rad, $ \dot{\omega} = 0 $ .} In this example, we have $ p = [\omega,\dot{\omega}]^{\mathsf{T}} $ and $ \mathcal{P} = [0,2\pi]\times [-5,5]$. Note that we deliberately  choose this simple example to facilitate visualization of the exact and inexact solution manifolds to illustrate the data augmentation schemes.

To approximate the MPC policy, we wish to generate  $ 100\times100 $ state-action data pairs using a grid-based sampling strategy to evenly cover the parameter space $ \mathcal{P} $, which would normally take $10^4$ offline NLP computations. In each of the following cases, we also solve the full NLP to compute $ \pi^*(p) $ at each sample to serve as a benchmark.
The NLP problems $ \Pi(p) $ are solved using\texttt{ IPOPT} \ifpreprint \cite{wachter2006ipopt}\fi with \texttt{MUMPS}
linear solver. The optimization problem and the NLP sensitivities were formulated using \texttt{CasADi v3.5.1} \cite{andersson2019CasADi}. All augmented samples are based on the sensitivity updates from the exact sample obtained by solving the NLP. 
All computations were performed on a 2.6 GHz
processor with 16GB memory. 

\paragraph*{Case 1: Solve 100 offline NLP problems}
 In this case, we queried the NMPC expert to generate $ N_{s} =  100 $ samples using a  $ 10\times10 $ grid as shown in Fig.~\ref{Fig:pendulumManifold} in red circles. 
Using each data sample, we further augment additional 100 data samples around each $ p_{i} $, leading to a total of $ M= $ 10$ ^4 $ augmented data samples using only the linear predictor as done in \cite{DK2021DataAug}, as well as the proposed improved predictor-corrector data augmentation scheme.  
These are shown in Fig.~\ref{Fig:pendulumManifold} in gray dots, where it can be seen that both the data augmentation schemes are able to sufficiently capture the solution manifold. 
The maximum error due to approximation when using only the linear predictor is 1.34, whereas the maximum error was 0.0074 with the improved data augmentation. 

\paragraph*{Case 2: Solve 9 offline NLP problems}
We then consider the case where we queried the NMPC expert only $ N_{s} =  9 $ times  using a  $ 3\times3 $ grid as shown in Fig.~\ref{Fig:pendulumManifold} in red circles. 
Using each data sample, we further augment additional 1089 data samples around  each sample, leading to a total of 9081 augmented data samples using only $ 9 $ offline NLP computations. Since the size of  the neighborhood increases, the inexact solution manifold using only the linear predictor is now visbily different from the exact solution manifold (with 9 piecewise affine manifolds), with the maximum error of 12.5. On the other hand, when using the improved data augmentation scheme, the maximum error is only 0.028. This shows that despite the large neighborhood used for data augmentation, we can get the generate data samples with a desired accuracy at only marginally increased computational cost (cf. Table~\ref{tb:CPU time}).

\paragraph*{Case 3: Solve only 1 offline NLP problem}
We then consider the extreme case, where we only solve $ N_{s} =1 $ offline NLP problem as shown in Fig.~\ref{Fig:pendulumManifold}. Using this single NLP computation, we augment 10000 data samples using a $ 100\times 100 $ grid. Using only the linear predictor, naturally we get a single affine manifold, leading to large approximation errors. However, using the improved data augmentation scheme, we are still able to augment data samples with a desired accuracy, where the maximum error is only 0.018, as opposed to 79.88. 

The total CPU time to generate the data samples using the two data augmentation schemes, as well as solving all the data points exactly along with the maximum error are  summarized in Table~\ref{tb:CPU time}, which also shows the  trade-off between  computation cost and accuracy. \ifpreprint 
We then learn the approximate policy using the data sets from the six different cases shown in Fig.~\ref{Fig:pendulumManifold}.  For the approximate policies in each case, we choose a generalized regression neural network (a variant of the radial basis function network) that was trained using the  \texttt{newgrnn} function in \texttt{ MATLAB v2020b}. The hyperparameters of $ \pi(p,\theta) $ were the same across all the cases, and the only variation was the data set used for training. The closed-loop trajectories of the approximate policies compared to the true MPC policy starting from different  initial conditions are shown in Fig.~\ref{Fig:pendulumSim}. Here, red dotted lines denote the closed-loop trajectories of the the approximate policy learned using augmented data set with only the linear predictor, red dashed lines denote the closed-loop trajectories of the approximate policy learned using augmented data set using the proposed predictor-corrector based data augmentation, summarized in Algorithm~\ref{alg:SensitivityTraining2}, and solid black lines denote the true MPC policy that serves as a benchmark. These results clearly show that the control policy learned using only the augmented data set with linear predictor steps results in reasonable performance if the neighborhood used for data augmentation is small. As the size of the neighborhood used for data augmentation increases, the approximation error increases (cf. Theorem~\ref{thm:MaxD}), resulting in poorer closed-loop performance. However, using the proposed data augmentation scheme with user enforced accuracy, we are able to learn a satisfactory control policy, even when a large neighborhood is used for data augmentation. This clearly demonstrates the benefit of using the improved data augmentation scheme, which enables us to use a larger neighborhood for data augmentation at the cost of a few additional linear solves (corrector steps). 
\else 
Since the main focus of this letter is on data generation, and not on policy approximation itself, closed-loop simulation results using the learned policy is not shown here. However, the interested reader can find this in the extended version  \cite{DK2023iDataAug}.  
\fi

\ifpreprint
\subsection{Example 2: Quadcopter control}
The proposed data augmentation method is also applied to a quadcopter control problem, results from which are briefly summarized here. The quadcopter is modelled as shown in \cite{bemporad2009hierarchical}. The system has $ x\in\mathbb{R}^{12} $ states and $ u \in \mathbb{R}^4 $ inputs. A nonlinear model predictive controller is designed to control the quadcopter to a desired target $ x_{target} $  described by an xyz-coordinate. The NMPc problem has a stage cost $ \ell(x,u) ) \|x - x_{target}\|_{Q}^2  + \|u\|_{R}^2 + \|\Delta u\|_{S}^2$. The goal is to train a control policy $ \pi(p_{i}; \theta) $ that takes as input the current states, control action, and the desired target, to compute the next action. That is, in this case $ p \in \mathbb{R}^{19} $. For this high dimensional system, the goal is to generate training data points using the proposed data augmentation scheme, which can be used to train a function approximator $ \pi: \mathbb{R}^{19} \rightarrow \mathbb{R}^4 $. 

$ N_s= 20  $ random samples are generated, where the MPC problem is solved exactly. We further randomly sample m=500 data points with ±20\% variation around each exact MPC sample, where the data is augmented from the MPC solution using the proposed method. By doing so, we generated a total of 10020 samples. For the predictor-corrector data augmentation, the corrector steps were used until the optimality residual was less than a tolerance of $ \epsilon_{tol}= 1e-2 $. 

Fig.~\ref{Fig:QuadError} shows the approximation error $ \|\hat{u} (p_i )-u^*(p_i)\| $ when using predictor-only data augmentation \cite{DK2021DataAug} (shown in blue), and the predictor-corrector data augmentation (shown in red). Their corresponding computation times are shown in Fig.~\ref{Fig:QuadCPU}, compared against generating all the samples exactly using the full NLP (yellow). To generate all the 10020 samples exactly required a total CPU time of 850.3s, whereas predictor-corrector data augmentation took 100.3s, and predictor-only data augmentation took 2.3s. This clearly shows the trade-off between the computation cost and the accuracy of the augmented samples.

\begin{figure*}
	\begin{subfigure}{0.5\textwidth}
			\centering
		\includegraphics[width=\linewidth]{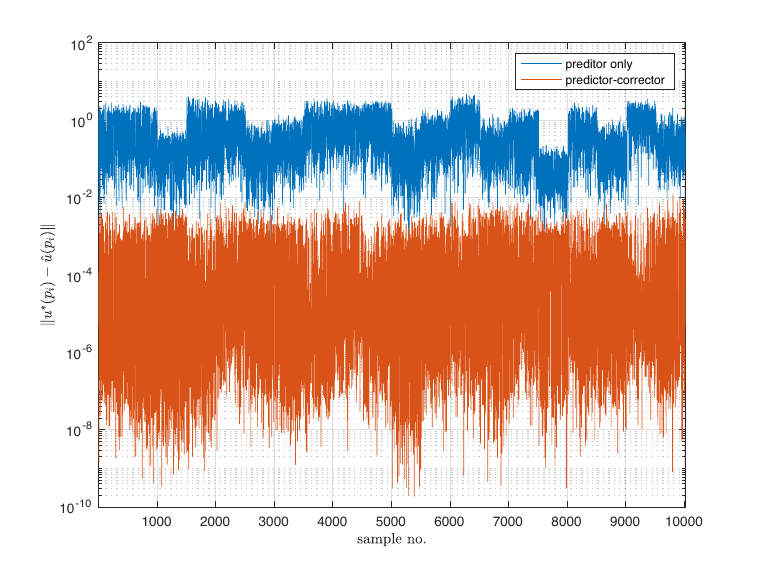}
		\caption{}\label{Fig:QuadError}
	\end{subfigure}
	\begin{subfigure}{0.5\textwidth}
	\centering
	\includegraphics[width=\linewidth]{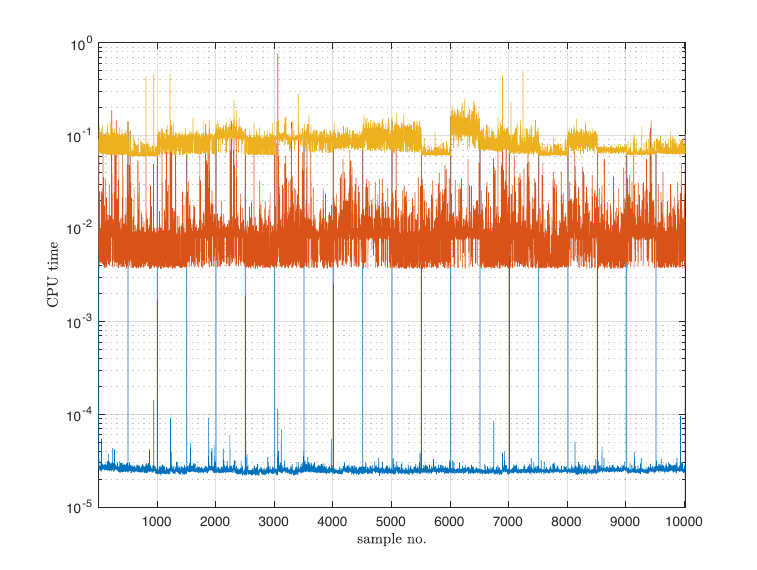}
	\caption{}\label{Fig:QuadCPU}
\end{subfigure}
\caption{Example 2 -  (a) Approximation error $ \|\hat{u} (p_i )-u^*(p_i)\| $ when using predictor-only data augmentation (blue), and predictor-corrector data augmentation (red). (b) CPU times for generating the full NLP data set (yellow), predictor-only augmented data set (blue), and predictor-corrector augmented data set (red). These two plots clearly show the trade-off between accuracy and the computation cost when generating training data to learn the MPC policy.}
\end{figure*}

\fi

\ifpreprint
\section{Discussion}
In this section, we provide some discussions on the placing our work on data augmentation in the context of other related literature. 
\subsection{Data augmentation and sampling techniques}
As mentioned in the introduction, generating large data sets that can be used for training and testing is critical for scaling MPC policy approximations. This challenge, if at all considered in the current works, is mainly addressed via efficient sampling techniques. 
This warrants a discussion on how our proposed data augmentation scheme fits in the context of  the different sampling strategies that have been proposed in the MPC literature. 

\paragraph{Adaptive sampling using resolution criteria} A recursive sampling algorithm was proposed in \cite{nubert2019mscthesis}, where new   sample locations are chosen with higher resolution as long as neighboring values are still differing.
At each sample location, the optimization problem is solved exactly to get the corresponding optimal action.  One can use the same algorithm as in \cite{nubert2019mscthesis} to decide where to sample the state-space, and if the newly generated sample locations are within a  neighborhood of a previously computed sample satisfying the assumptions of Theorem~\ref{thm:DataAugment}, then one can use the proposed data augmentation instead of solving the exact NLP at the newly sampled states. 

\paragraph{Sampling based on geometric random walks} The authors in \cite{chen2022large} recently proposed an algorithm to efficiently generate large data sets based on geometric random walks instead of independent sampling, where starting from a feasible state, small steps are taken iteratively along a random line until each step is feasible. Such an algorithm  naturally lends itself to our proposed data augmentation scheme, since this algorithm is based on the fact that  new samples are generated
based on previous successful samples. At every new step taken along the random walk, the corresponding optimal action can can be approximated using the previous sample, instead of solving it exactly. 

\paragraph{Control-oriented sampling} Based on the premise that  it is unlikely that all states from the feasible region are
equally likely to be observed during closed-loop operation, control oriented sampling strategies have been proposed in \cite{DK2021offsetENMPC}. In this approach, starting from a feasible initial state, the consecutive samples are selected based on noisy closed-loop simulations, such that the  samples are selected from a tube of closed loop trajectories likely to be observed in operation. Again, such an algorithm naturally lends itself to our proposed data augmentation scheme, since the  new samples are generated
based on previous successful samples. 

In general, the proposed data augmentation technique is complementary to the different sampling strategies, in the sense that, once the sample location is chosen using any sampling technique, instead of solving the optimization problem exactly at the chosen sample location, the proposed data augmentation technique can be used to efficiently approximate the corresponding optimal action with a desired accuracy at the given sample location. 
Given that the main motivation for developing  different sampling strategies is to scale MPC policy approximation to larger systems, combining these developments  with our proposed data augmentation scheme would enhance this goal. As such, the data augmentation scheme proposed in this paper complements the  developments on sampling strategies for offline training. 

\subsection{Data augmentation for a broader class of learning from demonstration problems}
The cost of generating sufficiently rich training data set  is also an open challenge in  a broad class of learning from demonstrations, such as imitation learning with interactive expert, and inverse optimal control. In this subsection, we briefly describe how the proposed data augmentation method can be used with a broad class of learning from demonstrations.

\paragraph{Imitation learning with interactive expert}
Imitation learning with interactive expert  is similar to the direct policy approximation approach described above, but the main difference is that we have access to an interactive expert in-the-loop, who we can query in order to improve our policy online \cite{argall2008learning,chernova2009interactive,ross2011reduction,jauhri2020interactive}.   For example, consider a system where we have an approximate policy running online, but we have an MPC controller running in the background that provides feedback on the optimal action for the state visited by the system. 

That is, we first train an initial policy $ \pi_k$ using the labeled data set  $ \mathcal{D}_{k}:= \{(x_{i},u_{i}^*)\}$. The current policy $ \pi_k $ is then rolled out to collect the trajectory  $ \{x_{t}, \pi_k(x_{t})\}_{t=1}^T $. For each trajectory, an interactive expert also provides the optimal policy for the states visited by the current roll-out (called interactive feedback), which are  aggregated such that $ \mathcal{D}_{k+1} = \mathcal{D}_{k} \cup \{(x_{t},u^*_{t})\}_{t=1}^T $ for $ k = 1, 2, \dots $, and the policy function  is updated on-the-fly by learning from  the aggregated data set $ \mathcal{D}_{k+1}$. 

In the case of an interactive expert that provides feedback on the rolled out policy (i.e.what the expert would have done in the states visited by the current policy), the proposed data-augmentation framework can be used to augment additional  samples around each state where the expert provides feedback. The \enquote{augmented} interactive feedback can then be used to improve the policy online. By doing so, we posit that this would enable us to generalize the feedback provided by the interactive expert to other nearby states not visited by the current policy (i.e. using the expert feedback, infer what the expert would have done in  other nearby states not visited by the current policy).  

The use of sensitivity-based data augmentation  for learning to control an inverted pendulum using imitation learning with interactive expert can be found in Appendix A.

\paragraph{Inverse Optimal Control}
An alternative approach  to direct policy approximation is to use the noisy observations of the optimal policy  to compute the cost function of a simpler optimization problem that is consistent with the expert demonstrations $ \mathcal{D}:= \{(x_{i},u_{i}^*)\}_{i=1}^N $. That is, we assume that $ \mathcal{D}:= \{(x_{i},u_{i}^*)\}_{i=1}^N $ comes from a simpler optimization problem, and we want to learn the cost function of this simpler optimization problem that is consistent with the demonstrations. By doing so, the policy is given by the simpler optimization problem, instead of the original complex optimization problem. For example, a framework for approximating  a complex MPC problem with a simple optimizing controller using inverse optimization was  shown in \cite{keshavarz2011imputing}.  
The proposed data augmentation scheme can also be used in such frameworks.

\paragraph{Value-function approximation}
Instead of approximating the policy function directly, one can obtain the optimal value function  $ V^*(x_{i}) $ from the expert queries, and approximate the optimal value function using the optimal state-value pairs $ \{(x_{i},V^*(x_{i}))\} $. The function approximator can then be used as the optimal cost-to-go function in simple myopic optimizing controllers in the context of approximate dynamic programming (ADP),  much like in  \cite{keshavarz2011imputing}(where instead of using a quadratic cost function,  any function approximator can be used). The proposed data augmentation framework can not only augment the optimal actions, but can also be used to augment the optimal state-value  pairs based on a few expert queries. To do this, (1) is solved for various state realizations $ \{x_{i}\} $, and the corresponding optimal value function $ V^*(x_{i}) $ is collected as the training data set. After each offline MPC solve, additional samples at $ x_{i} + \Delta x_{j} $ can be augmented using our proposed data augmentation scheme.  The  augmented data set $ \{(x_{i}, V^*(x_{i}))\} $ can then be used to fit a parametric function approximator that predicts the optimal value function $ V^*(x) $ for any given $ x $. This would be a better approach than learning the policy directly, since in this approach, we solve a much simpler and easier optimization problem, and by doing so, we can have feasibility guarantees by explicitly enforcing the constraints in the (simpler) optimization problem. 

\else
\section{Discussion}
 \textcolor[rgb]{0,0,0}{
This challenge of generating large data sets, if at all considered in the current works, is mainly addressed via efficient sampling techniques. 
This warrants a discussion on how our proposed data augmentation scheme fits in the context of the different sampling strategies that have been proposed in the  literature. 
 A recursive sampling algorithm was proposed in \cite{nubert2019mscthesis}, where new   sample locations are chosen with higher resolution as long as neighboring values are still differing.  
The authors in \cite{chen2022large} recently proposed an algorithm to efficiently generate large data sets based on geometric random walks instead of independent sampling, where starting from a feasible state, small steps are taken iteratively along a random line until each step is feasible. A control oriented sampling strategy was proposed in  \cite{DK2021offsetENMPC} based on the premise that  it is unlikely that all states from the feasible region are
equally likely to be observed during closed-loop operation. In this approach, starting from a feasible initial state, the consecutive samples are selected based on noisy closed-loop simulations, such that the  samples are selected from a tube of closed loop trajectories likely to be observed in operation.  All these sampling strategies help us decide where to sample the state-space, and at each sample location, the optimization problem is  solved exactly to get the corresponding optimal action. As such all these algorithm  naturally lend itself to our proposed data augmentation scheme. For example, if a sample location determined using any sampling strategy \cite{nubert2019mscthesis,chen2022large,DK2021offsetENMPC} is within a  neighborhood of a previously computed sample satisfying the assumptions of Theorem~\ref{thm:DataAugment}, then one can use the proposed data augmentation instead of solving the MPC problem exactly. Thus the proposed method complements  the different sampling strategies  \cite{nubert2019mscthesis,chen2022large,DK2021offsetENMPC}. 
}


\fi

\vspace{-0.2cm}
\section{Conclusion}
To summarize, this paper presented an improved data augmentation scheme for cheaply generating the training data samples for MPC policy approximation using only a fraction of the computation effort. We showed that by using additional corrector steps we can enforce a desired level of accuracy in the augmented samples (cf. Algorithm~1), which amounts to a few additional linear solves. By doing so the approximation error does not depend on the size of the neighborhood used for data augmentation, but on the user-defined optimality residual (cf. Theorem~ \ref{thm:MaxD2}), which allows us to augment samples from larger neighborhoods while without jeopardizing accuracy. This was demonstrated on a inverted pendulum control problem,  which clearly shows the trade-off between accuracy and computational effort. The proposed data augmentation approach can be used with any sampling strategy, as well as for a broad class of \textit{learning from demonstration} problems including imitation learning with interactive expert, value function approximation, and inverse optimal control. 

\bibliographystyle{IEEEtran}
\bibliography{L4DC}

\ifpreprint
\appendix
	
\subsection{Direct policy approximation with interactive expert}
The proposed data augmentation framework is applied  in the context of imitation learning with interactive expert on the same inverted pendulum example.  The objective here is to learn a policy function that would bring the pendulum from rest position ($ \omega = \dot{\omega} = 0 $) to its inverted position ($ \omega = 3.14, \dot{\omega} = 0 $). 

\paragraph{Task} Learn to invert a pendulum from rest. 
\paragraph{Initial policy} Linear policy $ u = -Kx $, with a$ K $  that is not able to accomplish the task.
\paragraph{Learning framework} 
We first start with no prior expert demonstrations, but with an initial  policy that is rolled out on the system. For this, we chose a linear policy $ u = -11\omega-7\dot\omega+35 $.  For each state visited in the rollout, the expert provides feedback on what the expert would have done in these states. Around each expert feedback, we further augment 25 samples using the proposed data augmentation framework. 

We first learn a policy function using only the expert feedback. To demonstrate the advantage of augmenting additional samples inferred from the expert feedback, we then learn a policy function based on the expert feedback and the augmented samples. For both these policies, we choose a generalized regression neural network that was trained using the  \texttt{newgrnn} function in \texttt{ MATLAB v2020b}.
The newly obtained approximate policy is then rolled out and this procedure is repeated for 25 times.  

Fig.~\ref{Fig:IEAll} left subplots shows the  states that are visited in each rollout, where the expert provides feedback (shown in red circles).  As it can be seen, even after 25 rollouts the number of demonstrations in the state-space is not much, which can affect the quality of the policy learned using only the expert feedback. Whereas, by augmenting additional samples around each expert feedback, we can cheaply generate several data samples covering a wider state-space than just what was visited in the rollouts. This is shown in black dots in Fig.~\ref{Fig:IEAll} right subplots. (From the 3D plot, one can already visually recognize the shape of the optimal solution manifold when we augment additional data samples). 

\begin{figure*}
	\centering
	\includegraphics[width=\linewidth]{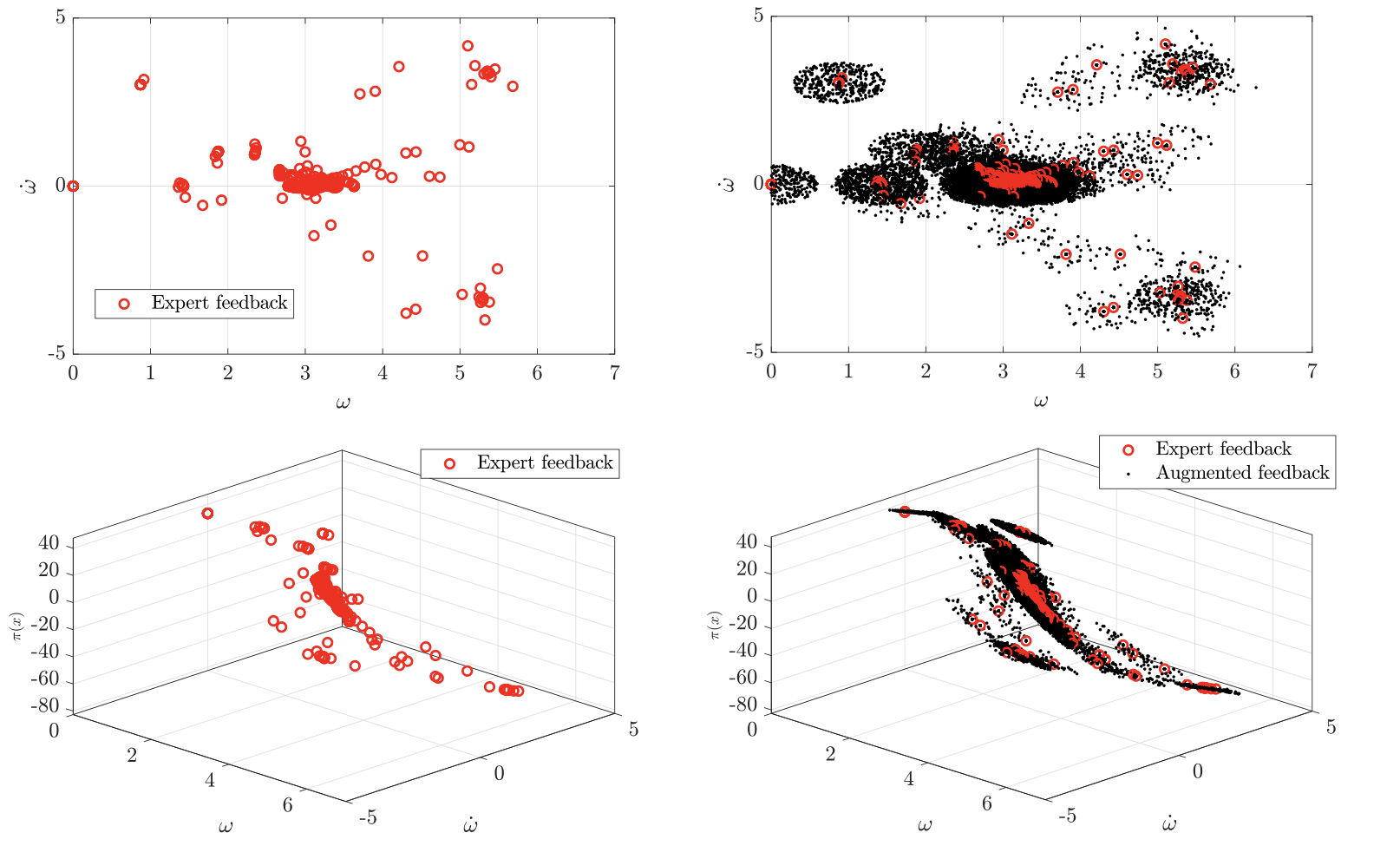}
	\caption{Closed loop performance over 15 rollouts when using only the expert feedback (shown in red), and using botht he expert and augmented feedback (shown in black). The bold lines shows the performance at the 15$ ^{th} $ rollout.} \label{Fig:IEAll}
\end{figure*}

\begin{figure*}
	\centering
	\includegraphics[width=\linewidth]{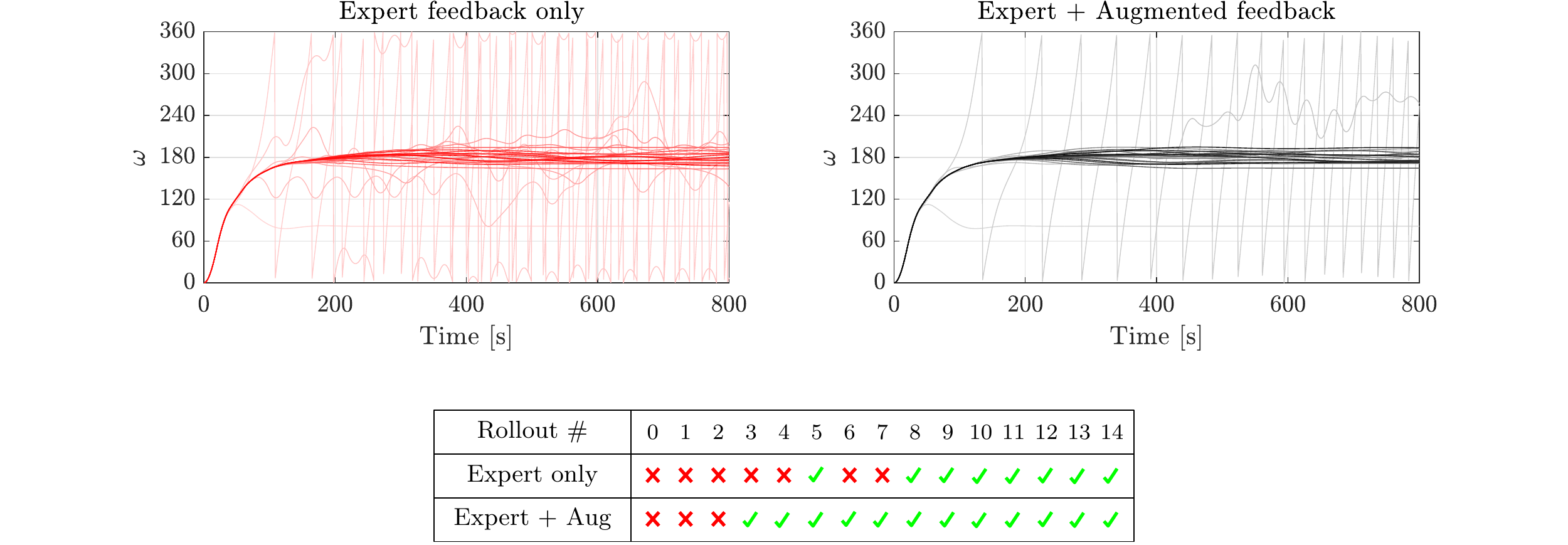}
	\caption{Closed-loop performance of all the rollouts using only expert feedback (left subplot) and using expert and augmented feedback (right subplot). Table shows whether the task in a particular rollout was successful or not.  } \label{Fig:InteractiveFeedbackAll}
\end{figure*}

The performance of the  policy learned using only the expert feedback   or using the expert and the augmented feedback for all the 25 rollouts are shown in Fig.~\ref{Fig:InteractiveFeedbackAll}.  Here it can be seen that by augmenting additional samples, we are able to better learn the policy, and hence learn to control the inverted pendulum better than if we only use the expert feedback. This is because, after each rollout, we not only learn from the expert feedback, but also infer what the expert would have done in other states not visited in the current rollout.  
\fi
\end{document}

\begin{rem}[Overlapping samples]
	The notion of an inexact manifold $ \hat{\mathbf{s}}(p) $ and Assumption~\ref{asm:inexactManifold}  implies that there exists a unique sample $ \hat{\mathbf{s}}(p) $ for any $ p \in \mathcal{P} $, i.e.,  there are no overlapping samples  that are augmented based on different NLP samples. This can be achieved by simply discarding overlapping samples that have a higher optimality residual $ \|\nabla \mathcal{L}\| $.  In other words, the optimality residual can guide the construction of the the neighborhoods $ \Delta \mathcal{P}_{i} $.
\end{rem}

\begin{subequations}\label{Eq:KKT}
	\begin{align}
		\nabla\mathcal{L}(\mathbf{w},p,\lambda,\mu) &= 0\\
		c(\mathbf{w},p) &=0\\
		g(\mathbf{w},p)& \leq 0\\
		\mu_i g_{i}(\mathbf{w},p)= 0,  \; \mu &\ge 0 \quad\forall i = 1, \dots,n_{g}
	\end{align}
\end{subequations}